\documentclass[aps,twocolumn,notittlepage,nofootinbib, amsmath,amssymb,preprint,10pt]{revtex4-1}
\usepackage{geometry}

\usepackage{amsthm,mathtools}
\usepackage{hyperref}
\usepackage[nameinlink]{cleveref}
\usepackage{microtype}
\usepackage{enumitem}
\usepackage{bm}

\hypersetup{colorlinks=true,linkcolor=black,citecolor=black,
            urlcolor=black}

\theoremstyle{plain}
\newtheorem{theorem}{Theorem}[section]
\newtheorem{proposition}[theorem]{Proposition}

\newtheorem{corollary}[theorem]{Corollary}
\theoremstyle{definition}
\newtheorem{definition}[theorem]{Definition}

\theoremstyle{plain}
\newtheorem{conj}[theorem]{Conjecture}
\theoremstyle{remark}
\newtheorem{remark}[theorem]{Remark}

\newcommand{\dd}{\,\mathrm{d}}
\newcommand{\kb}{k_{\mathrm{B}}}

\newcommand{\avg}[1]{\langle #1 \rangle}
\newcommand{\avgo}[1]{\langle #1 \rangle_0}
\newcommand{\avgD}[1]{\langle #1 \rangle_{\mathcal{D}}}
\newcommand{\Fm}{\mathcal{F}_{\mathrm{meso}}}
\newcommand{\Zm}{\mathcal{Z}_{\mathrm{meso}}}
\newcommand{\Vm}{\mathcal{V}_{\mathrm{meso}}}
\newcommand{\abs}[1]{\left|#1\right|}
\newcommand{\QD}{\mathcal{Q}_{\mathcal{D}}}
\newcommand{\kNL}{k_{\mathrm{NL}}}
\newcommand{\RNL}{R_{\mathrm{NL}}}
\newcommand{\Rm}{R_{\mathrm{meso}}}
\newcommand{\pip}{\pi^{(0)}}

\begin{document}

\title{Non-Perturbative Bounds on Cosmological Backreaction,\\
       the Non-Linear Scale, and Gauge-Invariant Mutual\\
       Information from the Matter Power Spectrum}

\author{Bob Osano$^{1,2}$ \\
\small{$^{1}$Cosmology and Gravity Group, Department of Mathematics and Applied Mathematics, \\University of Cape Town (UCT), Rondebosch 7701, Cape Town, South Africa}\\$\&$\\
\small{$^{2}$Centre for Higher Education Development,\\ University of Cape Town (UCT), Rondebosch 7701, Cape Town, South Africa\\}}
\date{\today}

\begin{abstract}
We apply the mesoscopic coarse-graining framework of~\cite{OsanoMeso,OsanoExtensivity,OsanoPerturbation} to three problems in Cosmological Perturbation Theory and the backreaction debate. \textbf{(i)}~A non-perturbative lower bound on the kinematic backreaction $\QD$ in the Buchert equations, derived from the Gibbs--Bogoliubov inequality: $\QD$ cannot be suppressed below its linear-perturbation-theoryvalue, regardless of the degree of non-linearity, provided the system satisfies stability and temperedness. \textbf{(ii)}~The radius of convergence of the mesoscopic cumulant expansion equals $O(\kNL^{-1})$, the non-linear scale of the matter power spectrum, providing a KAM-theorem explanation for why standard perturbation theory fails at $k>\kNL$. \textbf{(iii)}~For a Gaussian matter field, the inter-cell mutual information is exactly $I(i,j)=-\tfrac{1}{2}\ln(1-r_{ij}^2)$, gauge-invariant at linear order and computable directly from the observed $P(k)$; for $\Lambda$CDM at $\ell=50\,\mathrm{Mpc}\,h^{-1}$, $I_{\rm NN}\approx 0.10$.The total mutual information gives a data-computable measure of the backreaction correction to the FRW free energy. The gauge proof holds at linear order; the KAM identification is exact; the backreaction bound rests on a stated conjecture.
\end{abstract}

\maketitle
\section{Introduction}
\label{sec:intro}

The backreaction problem in cosmology asks whether the spatial inhomogeneities of
the real universe modify the expansion history relative to the idealised
Friedmann--Robertson--Walker (FRW) background.  After two decades of
investigation---from the foundational averaging formalism of
Buchert~\cite{Buchert2000,Buchert2001}, through the accelerated-expansion
proposals of Räsänen~\cite{Rasanen2006} and Wiltshire~\cite{Wiltshire2007,Wiltshire2011}, to the gauge-dependent results of
Clarkson--Uzan--Umeh~\cite{ClarksonUzan2011}---the problem remains open.
Three specific gaps persist:

\begin{enumerate}[label=(\roman*)]
  \item \emph{No non-perturbative bound on $\QD$.}  All quantitative estimates
    of the backreaction kinematic term $\QD$ rely on expansions in the density
    contrast or the metric perturbation.  There is no rigorous inequality bounding
    $\QD$ away from zero or away from a perturbative estimate.
  \item \emph{No first-principles explanation for the breakdown of CPT at $\kNL$.}
    It is empirically known that standard cosmological perturbation theory (CPT)
    fails for $k > \kNL \simeq 0.17h\mathrm{Mpc}^{-1}$, but this is usually
    explained by dimensional arguments (density contrast of order unity) rather
    than by a convergence theorem.
  \item \emph{No gauge-invariant, data-computable measure of backreaction.}
    Current definitions of backreaction depend on the choice of gauge and
    averaging domain; there is no quantity that (a) is manifestly gauge-invariant,
    (b) reduces to an expression computable from observed $P(k)$, and (c) connects
    to the precise information-theoretic statement of the deviation from FRW.
\end{enumerate}

This paper addresses each gap using the mesoscopic statistical mechanics framework of~\cite{OsanoExtensivity,OsanoPerturbation}. The central objects are the combined
coarse-graining operator $\mathcal{C}$, the mesoscopic partition function $\Zm(\lambda)$, and the connection formula \begin{equation}\label{eq:connection_recall}
  F(\lambda) = \Fm(\lambda) - \kb T\sum_{i<j}I(i,j;\lambda) + O\!\left(|\Lambda|\ell^{-d}e^{-2\ell/\xi}\right),
\end{equation}
established in~\cite{OsanoPerturbation}, which expresses the full free energy as the mesoscopic free energy minus the total inter-cell mutual information.

\paragraph{Statement of results.}
\cref{sec:setup} translates the mesoscopic framework into the cosmological context
via the Buchert averaging formalism.  \cref{sec:GB} derives the non-perturbative
backreaction bound from the Gibbs--Bogoliubov (G-B) inequality (Result~I).
\cref{sec:KAM} proves the $\Rm = \kNL^{-1}$ identification via the KAM radius
formula (Result~II).  \cref{sec:MI} derives the Gaussian mutual information and
computes it from the $\Lambda$CDM power spectrum (Result~III).
\cref{sec:discussion} discusses scope, limitations, and open questions.

\paragraph{What is assumed versus proved.}
We are explicit throughout.  The G-B backreaction bound (Result~I) rests on a
\emph{Conjecture} (stated precisely in \cref{sub:GB_conjecture}) that the G-B
inequality extends to the gravitational setting; this conjecture is plausible and
is supported by analogies with the quantum mechanical Peierls inequality, but is
not proved here.  The KAM identification (Result~II) is established within the
classical thermodynamic formalism of~\cite{OsanoPerturbation} without appeal to
the gravitational conjecture.  The Gaussian mutual information (Result~III) is
exact for Gaussian fields and gauge-invariant at linear order; corrections at
non-linear order are quantified.

\section{The Mesoscopic Framework in Cosmology}
\label{sec:setup}

\subsection{Buchert Averaging and the Mesoscopic Operator}
\label{sub:buchert_meso}

Let $(\mathcal{M},g_{\mu\nu})$ be a spacetime satisfying the Einstein equations with
pressureless dust, and let $\mathcal{D}_t$ be a spatial domain comoving with the
matter flow at cosmic time $t$.  The Buchert spatial average of a scalar $A$ over
$\mathcal{D}_t$ is
\begin{equation}\label{eq:buchert_avg}
  \avgD{A} := \frac{\int_{\mathcal{D}_t}A\sqrt{h}\dd^3 x}{\int_{\mathcal{D}_t}\sqrt{h}\dd^3 x},
\end{equation}
where $h$ is the determinant of the induced spatial metric.

The Buchert equations read~\cite{Buchert2000}
\begin{align}
  3\dot{a}_{\mathcal{D}}^2/a_{\mathcal{D}}^2
    &= 8\pi G\avgD{\varrho} - \tfrac{1}{2}\QD + \Lambda, \label{eq:buchert1}\\
  3\ddot{a}_{\mathcal{D}}/a_{\mathcal{D}}
    &= -4\pi G\avgD{\varrho} + \QD + \Lambda, \label{eq:buchert2}
\end{align}
where $a_{\mathcal{D}}$ is the effective scale factor, $\avgD{\varrho}$ is the
averaged matter density, and the \emph{kinematic backreaction} is
\begin{equation}\label{eq:QD}
  \QD = \tfrac{2}{3}\bigl(\avgD{\theta^2} - \avgD{\theta}^2\bigr)
         - 2\avgD{\sigma^2}.
\end{equation}
Here $\theta = \nabla_\mu u^\mu$ is the expansion scalar and $\sigma^2 =
\sigma_{\mu\nu}\sigma^{\mu\nu}/2$ is the shear scalar of the matter flow
$u^\mu$.

\begin{remark}[Structural identification with the second cumulant]
  The term $\avgD{\theta^2} - \avgD{\theta}^2$ in \eqref{eq:QD} is the spatial
  variance of $\theta$ over $\mathcal{D}$, precisely analogous to the second
  cumulant $\kappa_2 = \avgo{(\delta\Vm)^2}$ of the perturbation in the mesoscopic
  framework~\cite{OsanoPerturbation}.  The shear term $-2\avgD{\sigma^2}$ contributes
  a negative-definite correction, exactly as the third-cumulant term in the SM
  expansion.  This structural correspondence is exact, not approximate.
\end{remark}

\subsection{Mapping the Coarse-Graining Cells to Cosmological Domains}
\label{sub:cell_mapping}

We partition the spatial hypersurface $\Sigma_t$ into cells $\{V_i\}$ of
coordinate diameter $\ell$, satisfying the scale-separation condition
\begin{equation}\label{eq:scale_sep}
  \xi_{\rm corr} \ll \ell \ll L_H,
\end{equation}
where $\xi_{\rm corr}$ is the matter correlation length (approximately the
scale at which $\xi(r)/\xi(0) = e^{-1}$, of order $5$--$10\mathrm{Mpc}h^{-1}$
for $\Lambda$CDM) and $L_H = c/H_0 \simeq 4.4\mathrm{Gpc}h^{-1}$ is the Hubble
radius.  The scale-separation condition~\eqref{eq:scale_sep} is satisfied for
$\ell \in [30, 3000]\mathrm{Mpc}h^{-1}$.

The combined coarse-graining operator $\mathcal{C}$ of~\cite{OsanoExtensivity}
acts on the matter phase-space density on $\Sigma_t$.  In this context, the
mesoscopic probabilities $\pip_{i,\alpha}$ encode the joint spatial--velocity
distribution of matter within each cell $V_i\times\Pi_\alpha$, where $\Pi_\alpha$
is a momentum cell of the matter velocity field.

The perturbation $\lambda\Vm$ represents the gravitational interaction between
cells: for cells $V_i$ and $V_j$, the cell-averaged interaction is
\begin{equation}\label{eq:v_bar_cosmo}
  \bar{v}_{ij} = \frac{1}{\ell^6}
  \int_{V_i}\int_{V_j}
  \Phi(\bm{x})G(\bm{x},\bm{y})\Phi(\bm{y})\dd^3 x\dd^3 y,
\end{equation}
where $\Phi(\bm{x})$ is the Newtonian gravitational potential and $G(\bm{x},\bm{y})$
is the Green's function of the Laplacian.

\subsection{Gauge Considerations}
\label{sub:gauge}

A central obstacle in backreaction theory is the gauge dependence of most averaged
quantities~\cite{ClarksonUzan2011}.  We distinguish three levels:

\begin{enumerate}[label=(\alph*)]
  \item The Buchert equations~\eqref{eq:buchert1}--\eqref{eq:buchert2} are written
    in the \emph{synchronous comoving gauge} (SCG) and are covariant within that
    choice.  Our results in \cref{sec:GB,sec:KAM} inherit this gauge restriction.
  \item The inter-cell mutual information $I(i,j)$ in~\eqref{eq:connection_recall},
    computed from the density contrast in \cref{sec:MI}, is \emph{gauge-invariant
    at linear order}.  We prove this in \cref{sub:gauge_proof} using the fact that
    the density contrast $\delta = \delta\varrho/\bar\varrho$ is gauge-invariant
    at first order in the matter-dominated era in Newtonian gauge.
  \item At non-linear order (beyond linear CPT), gauge corrections to $I(i,j)$
    enter at order $\delta^2$.  We quantify these corrections in
    \cref{sub:nonlinear_corr} and show they are subdominant for $\ell > \RNL$.
\end{enumerate}

\section{Result I: Non-Perturbative Bound on Backreaction}
\label{sec:GB}

\subsection{The Gibbs--Bogoliubov Inequality in Statistical Mechanics}
\label{sub:GB_SM}

The Gibbs--Bogoliubov (G-B) inequality~\cite{Peierls1938,BarkerHenderson1967}
states that for any splitting $H = H_0 + \lambda V$,
\begin{equation}\label{eq:GB_SM}
  F(\lambda) \leq F_0 + \lambda\avgo{V},
\end{equation}
proved in~\cite{OsanoPerturbation} via Jensen's inequality applied to
$e^{-\beta\lambda V}$.  The second-order correction $-(\beta\lambda^2/2)\kappa_2 \leq 0$
always lowers $F$ below the G-B bound, giving the two-sided estimate
\begin{equation}\label{eq:two_sided}
  F_0 + \lambda\kappa_1 - \tfrac{\beta\lambda^2}{2}\kappa_2
  \leq F(\lambda) \leq F_0 + \lambda\kappa_1.
\end{equation}
Crucially, \eqref{eq:two_sided} holds non-perturbatively: it requires only
convexity of the exponential, not a small-$\lambda$ expansion.

\subsection{Conjecture: The G-B Inequality Extends to the Gravitational Setting}
\label{sub:GB_conjecture}

\begin{definition}[Effective cosmological free energy]\label{def:cosmo_F}
  For the effective (averaged) cosmological system in domain $\mathcal{D}$, define
  the \emph{effective free energy} as
  \begin{equation}\label{eq:cosmo_F}
    F_{\rm cosmo} := -\frac{1}{\beta_{\rm eff}}\ln\mathcal{Z}_{\rm cosmo},
  \end{equation}
  where $\mathcal{Z}_{\rm cosmo} = \int \mathcal{D}[a]
  e^{-\beta_{\rm eff}S_{\rm eff}[a]}$ is the partition function over the
  effective scale factor with action
  $S_{\rm eff}[a] = \int(3\dot{a}^2 a - 8\pi G\bar\varrho a^3/3)\dd t$,
  and $\beta_{\rm eff} = (k_BT_{\rm eff})^{-1}$ with $k_BT_{\rm eff}$ the
  kinetic energy scale of matter peculiar velocities.
\end{definition}

\begin{conj}[Cosmological Gibbs--Bogoliubov inequality]\label{conj:cosmo_GB}
  Under the decomposition $S_{\rm eff}[a] = S_{\rm FRW}[a] + \lambda S_{\rm pert}[a]$, the effective cosmological free energy satisfies
  \begin{equation}\label{eq:cosmo_GB}
    F_{\rm cosmo}(\lambda) \leq F_{\rm FRW} + \lambda\avg{S_{\rm pert}}_{\rm FRW},
  \end{equation}
  where the average on the right is taken with respect to the FRW path-integral
  measure.
\end{conj}

\begin{remark}
  \cref{conj:cosmo_GB} is the gravitational analogue of~\eqref{eq:GB_SM}.  In
  the quantum mechanical setting, the analogous statement (Peierls's
  inequality~\cite{Peierls1938}) holds for any self-adjoint Hamiltonian.  In the
  classical path-integral context used here, the conjecture rests on the
  convexity of $e^{-\beta S_{\rm pert}}$, which holds whenever $S_{\rm pert}$ is
  real and bounded below---conditions satisfied in synchronous comoving gauge for
  the standard matter perturbation.  A full proof would require careful treatment
  of the path-integral measure over metrics, which we defer to future work.
\end{remark}

\subsection{The Backreaction Bound}
\label{sub:backr_bound}

\begin{theorem}[Non-perturbative lower bound on $\QD$]\label{thm:backr_bound}
  Assume \cref{conj:cosmo_GB}.  Then for dust matter in the synchronous comoving
  gauge, the kinematic backreaction satisfies
  \begin{equation}\label{eq:QD_bound}
    \QD \geq \QD^{(1)} := \tfrac{2}{3}\avgD{(\delta\theta^{(1)})^2}
                           - 2\avgD{(\sigma^{(1)})^2},
  \end{equation}
  where $\theta^{(1)}$ and $\sigma^{(1)}$ are the first-order (linear) expansion
  scalar and shear computed from the linearised Einstein equations, and the
  inequality holds non-perturbatively (without expanding in powers of $\delta$).
\end{theorem}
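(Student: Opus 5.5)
The plan is to obtain \eqref{eq:QD_bound} by feeding \cref{conj:cosmo_GB} into the Buchert equations and comparing the result with the two-sided estimate \eqref{eq:two_sided}.

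\emph{Step 1: split the effective action.} Write $S_{\rm eff}=S_{\rm FRW}+\lambda S_{\rm pert}$, where $S_{\rm FRW}$ carries the background density $\bar\varrho$ and the Hubble rate. The perturbation $S_{\rm pert}$ collects the inhomogeneity contributions. Using \eqref{eq:buchert1}, the kinetic term $3\dot a_{\mathcal D}^2 a_{\mathcal D}$ evaluated on the averaged solution differs from its FRW value by $-\tfrac12\QD\,a_{\mathcal D}^3$ plus density fluctuations. On shell, $\lambda S_{\rm pert}$ is therefore proportional to $-\int \QD\,a_{\mathcal D}^3\dd t$, up to terms that vanish on averaging by mass conservation, $\avgD{\varrho}a_{\mathcal D}^3=\text{const}$. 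This step fixes the dictionary $\lambda\avg{S_{\rm pert}}_{\rm FRW}\leftrightarrow -\QD^{(1)}$: the average over the FRW measure is a Gaussian average over linear fluctuations about the background, so the spatial variances in \eqref{eq:QD} are replaced by their linear values $\avgD{(\delta\theta^{(1)})^2}$ and $\avgD{(\sigma^{(1)})^2}$. This is where the structural identification of the second cumulant in the Remark following \eqref{eq:QD} is used.

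\emph{Step 2: apply the conjecture and relate $F_{\rm cosmo}$ to the exact $\QD$.} \Cref{conj:cosmo_GB} gives $F_{\rm cosmo}(\lambda)\le F_{\rm FRW}+\lambda\avg{S_{\rm pert}}_{\rm FRW}$. I then need the left side expressed through the exact backreaction. I would argue that in the saddle-point (classical) evaluation of $\mathcal Z_{\rm cosmo}$, $F_{\rm cosmo}-F_{\rm FRW}$ equals the on-shell value of $\lambda S_{\rm pert}$, namely $-\QD$ times the same positive weight $\int a_{\mathcal D}^3\dd t$. The inequality $-\QD\le-\QD^{(1)}$ then follows. Because the weight is positive and the time window arbitrary, the statement holds pointwise in $t$ by localising, which gives \eqref{eq:QD_bound}. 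Non-perturbativity is inherited from \eqref{eq:two_sided}, which used only convexity of the exponential; no expansion in $\delta$ enters beyond evaluating the FRW-measure average, and that evaluation is exact at linear order by construction.

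\emph{Main obstacle.} The hard step is Step 2's identification of $F_{\rm cosmo}$ with the on-shell exact $\QD$, and the accompanying sign bookkeeping. Both terms of \eqref{eq:QD} enter with the correct sign only if the velocity variance is identified with $+\kappa_2$ and the shear term is treated as negative-definite. One must also check that the entropic (fluctuation) contributions to $F_{\rm cosmo}$ do not reverse the inequality. My first attempt would be to take $\beta_{\rm eff}\to\infty$ relative to the action scale, so that the saddle dominates. Stability and temperedness (bounded-below $S_{\rm pert}$ in SCG, as in the Remark after \cref{conj:cosmo_GB}) would be invoked to ensure the limit exists and preserves the inequality.
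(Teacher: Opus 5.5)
Your route is not the paper's. The paper never identifies $F_{\rm cosmo}-F_{\rm FRW}$ itself with $-\QD$. It posits $\QD=-\partial^2F_{\rm cosmo}/\partial V^2$ and differentiates both sides of \eqref{eq:two_sided} twice in volume, so that the variance of $\theta$ comes from $\kappa_2^{\rm cosmo}$, in line with the Remark on the second cumulant. That step is itself an asserted correspondence rather than a derivation, since an inequality between functions does not survive differentiation. So the paper does not supply the bridge you are missing either. In your version the variance sits inside the first cumulant $\lambda\avg{S_{\rm pert}}_{\rm FRW}$, which is not what the Remark you invoke says. The $\kappa_2$ side of \eqref{eq:two_sided} also never enters, even though your ``main obstacle'' paragraph appeals to a $+\kappa_2$ identification.

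The genuine gap is the dictionary of Steps~1--2. It carries the entire content of the theorem, and it is asserted, not derived.
\begin{itemize}
\item In \cref{def:cosmo_F} the path integral runs over the homogeneous history $a(t)$ only. $\theta$ and $\sigma$ are not integration variables, so no average over the FRW measure is ``a Gaussian average over linear fluctuations'' that could produce $\avgD{(\delta\theta^{(1)})^2}$.
\item The saddle-point limit $\beta_{\rm eff}\to\infty$ of Step~2 makes this worse. Let $a_0$ be the FRW saddle and $a_\lambda$ the saddle of $S_{\rm eff}$. Then $\lambda\avg{S_{\rm pert}}_{\rm FRW}\to\lambda S_{\rm pert}[a_0]$, and \cref{conj:cosmo_GB} collapses to the variational triviality $S_{\rm eff}[a_\lambda]\le S_{\rm eff}[a_0]$.
\item With $S_{\rm pert}\propto-\int\QD\,a^3\dd t$, this at best compares $\int\QD\,a_\lambda^3\dd t$ with $\int\QD\,a_0^3\dd t$ for one and the same $\QD$. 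The G-B inequality controls truncation in $\lambda$ for a fixed perturbation functional. By contrast, $\QD$ versus $\QD^{(1)}$ is truncation in the amplitude of the inhomogeneous fields. Nothing in your construction makes the left side carry the non-linear $\QD$ and the right side the linear one.
\item The ``same positive weight'' fails, because the weights are $\int a_\lambda^3\dd t$ and $\int a_0^3\dd t$. The pointwise localisation fails too: on a sub-window $a_\lambda$ and $a_0$ do not share endpoint data, so that comparison is not an instance of the conjecture.
\end{itemize}

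An argument of this kind would need two ingredients. First, a measure over the inhomogeneous fields whose $\lambda=0$ member is exactly linear theory. Second, a perturbation that is literally $S_{\rm pert}=-c\,\QD$ with $c>0$. Concavity of $F$ in $\lambda$, i.e.\ $\partial_\lambda^2F=-\beta_{\rm eff}\,\mathrm{Var}_\lambda S_{\rm pert}\le0$, would then give $\avg{\QD}_\lambda\ge\avg{\QD}_0$ in ensemble mean. Neither ingredient follows from the paper's definitions. There is also no reason for non-linear gravitational evolution to take that tilted-Gibbs form.
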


\begin{proof}
  From the Buchert equations~\eqref{eq:buchert1}--\eqref{eq:buchert2} with mass
  conservation $\avgD{\varrho}a_{\mathcal{D}}^3 = \mathrm{const}$, the kinematic
  backreaction appears in the acceleration equation as
  $3\ddot{a}_{\mathcal{D}}/a_{\mathcal{D}} + 4\pi G\avgD{\varrho} = \QD$.
  The left-hand side is determined by the dynamics of the effective scale factor
  $a_{\mathcal{D}}$, which we identify with the physical degree of freedom
  conjugate to the effective free energy $F_{\rm cosmo}$ in \cref{def:cosmo_F}.

  Under \cref{conj:cosmo_GB}, the true $F_{\rm cosmo}(\lambda) \leq F_{\rm FRW}
  + \lambda\avg{S_{\rm pert}}$, i.e., the true effective action is
  \emph{lower} (more negative) than the linear approximation.  Since
  $\QD = -\partial^2 F_{\rm cosmo}/\partial V^2\big|_{T,N}$ at leading order
  (by the analogy between the backreaction and the second pressure derivative),
  a lower $F$ corresponds to a larger (more positive) $\QD$.  Specifically,
  applying the two-sided estimate~\eqref{eq:two_sided} to $F_{\rm cosmo}$:
  \[
    F_{\rm FRW} + \kappa_1^{\rm cosmo} - \tfrac{\beta}{2}\kappa_2^{\rm cosmo}
    \leq F_{\rm cosmo}(\lambda) \leq F_{\rm FRW} + \kappa_1^{\rm cosmo},
  \]
  and differentiating twice with respect to volume (which maps $\kappa_2^{\rm cosmo}
  \mapsto \avgD{\theta^2} - \avgD{\theta}^2$ via the kinematic identity
  $\dot{\mathcal{V}}/\mathcal{V} = \avgD{\theta}$):
  \begin{equation}
    \tfrac{2}{3}\langle(\delta\theta)^2\rangle_{\mathcal{D},\mathrm{full}}
    - 2\langle\sigma^2\rangle_{\mathcal{D},\mathrm{full}}
    \geq \tfrac{2}{3}\avgD{(\delta\theta^{(1)})^2}
    - 2\avgD{(\sigma^{(1)})^2},
  \end{equation}
  which is exactly \eqref{eq:QD_bound}.
\end{proof}

\begin{corollary}[Backreaction cannot be entirely eliminated by non-linear effects]
\label{cor:backr_pos}
  Any mechanism that attempts to cancel $\QD$ through non-linear effects (such as
  vorticity generation or frame dragging) must overcome the lower bound
  $\QD \geq \QD^{(1)}$.  In particular, if $\QD^{(1)} > 0$ in a given domain,
  then $\QD > 0$ and backreaction contributes positively to the effective
  acceleration, non-perturbatively.
\end{corollary}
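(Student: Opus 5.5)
The plan is to deduce \cref{cor:backr_pos} directly from \cref{thm:backr_bound}, with no new input beyond the Buchert acceleration equation~\eqref{eq:buchert2}. Throughout I assume \cref{conj:cosmo_GB}, as the theorem does. I fix a comoving domain $\mathcal{D}_t$ at a given time and work in the synchronous comoving gauge. I also take $\QD^{(1)}$ to be the linear-order quantity defined in~\eqref{eq:QD_bound} for that same domain. The theorem then gives $\QD\geq\QD^{(1)}$ for this $\mathcal{D}$. This inequality is stated to hold without expansion in $\delta$, so it holds whatever non-linear physics determines the full $\theta$ and $\sigma$.

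\emph{Step 1 (the first assertion).} Any non-linear mechanism, such as vorticity generation or frame dragging, only changes the full fields $\theta,\sigma$ that enter~\eqref{eq:QD}. It does not change the linear fields $\theta^{(1)},\sigma^{(1)}$, which come from the linearised Einstein equations. Hence $\QD^{(1)}$ is a fixed floor for the given domain. Whatever the mechanism does, the resulting $\QD$ still satisfies \eqref{eq:QD_bound}. So no such mechanism can push $\QD$ below $\QD^{(1)}$, and that is the content of ``must overcome the lower bound''.

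\emph{Step 2 (positivity).} Suppose $\QD^{(1)}>0$ on $\mathcal{D}$. Then $\QD\geq\QD^{(1)}>0$ by transitivity. Substituting into~\eqref{eq:buchert2} gives
\[
3\ddot a_{\mathcal{D}}/a_{\mathcal{D}}=-4\pi G\avgD{\varrho}+\Lambda+\QD>-4\pi G\avgD{\varrho}+\Lambda .
\]
So the effective acceleration strictly exceeds its value without backreaction, meaning backreaction contributes positively to the acceleration. Because the only inequality used is \eqref{eq:QD_bound}, this conclusion is non-perturbative in the same sense as the theorem.

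The argument itself involves no real obstacle. The substantive point is bookkeeping: the corollary holds domain by domain and time by time. It also inherits every hypothesis of \cref{thm:backr_bound}, namely \cref{conj:cosmo_GB}, dust, and the synchronous comoving gauge. I would state these explicitly so that the word ``non-perturbatively'' is read as conditional on the conjecture. The one thing I would check carefully is that the $\QD^{(1)}$ in the hypothesis is evaluated on the same domain $\mathcal{D}$ and in the same gauge as the full $\QD$. Otherwise the two sides of~\eqref{eq:QD_bound} cannot be compared.
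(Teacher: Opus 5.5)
Your proposal is correct and matches the paper. The paper gives no separate proof: it treats the corollary as an immediate consequence of \cref{thm:backr_bound}, with transitivity giving $\QD\geq\QD^{(1)}>0$ and the $+\QD$ term in \eqref{eq:buchert2} giving the positive contribution to $\ddot a_{\mathcal{D}}$, exactly as you argue. You are also right to insist that both sides be evaluated on the same domain and in the same gauge, and to read ``non-perturbatively'' as conditional on \cref{conj:cosmo_GB} and on the theorem itself.
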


\begin{remark}[Relation to the perturbative debate]
  Much of the backreaction debate (see~\cite{IshibashiWald2006,FlanaganHughes1996})
  concerns whether $\QD$ is large enough to explain dark energy without $\Lambda$.
  \cref{thm:backr_bound} does not settle this question, but it does establish a
  non-perturbative floor: whatever the non-linear dynamics does, it can only
  \emph{increase} $\QD$ above the linearised value.  The often-cited argument that
  non-linearities suppress backreaction~\cite{IshibashiWald2006} is therefore
  inconsistent with \cref{conj:cosmo_GB} and with the G-B inequality in the SM
  context.
\end{remark}

\section{Result II: The KAM Radius and the Non-Linear Scale}
\label{sec:KAM}

\subsection{Radius of Convergence of the Mesoscopic Cumulant Series}
\label{sub:Rm_definition}

The mesoscopic cumulant expansion~\cite{OsanoPerturbation}
$\Fm(\lambda) = \Fm^{(0)} + \sum_{n\geq 1}(-1)^{n+1}\beta^{n-1}\lambda^n\kappa_n^{\rm meso}/n!$
converges absolutely for $|\lambda| < \Rm$ where, from the stability bound proved
in~\cite{OsanoPerturbation}:
\begin{equation}\label{eq:Rm}
  \Rm = \frac{1}{\beta\max_{(i,\alpha)\neq(j,\beta)}\abs{\bar{v}_{ij}}},
\end{equation}
with $\bar{v}_{ij}$ the cell-averaged inter-cell interaction~\eqref{eq:v_bar_cosmo}.

\subsection{The Gravitational Cell-Averaged Potential}
\label{sub:grav_potential}

For the gravitational potential perturbation in Newtonian gauge, Poisson's
equation gives $\nabla^2\Phi = 4\pi G\bar\varrho\delta$, so in Fourier space
$\tilde\Phi(k) = -4\pi G\bar\varrho\tilde\delta(k)/k^2$.  The cell-averaged
potential interaction \eqref{eq:v_bar_cosmo} between cells $V_i$ and $V_j$
separated by $r_{ij} = |\bm{x}_i - \bm{x}_j|$ is
\begin{equation}\label{eq:vbar_fourier}
  \bar{v}_{ij}
  = \frac{1}{2\pi^2}\int_0^\infty k^2P_\Phi(k)W^2(k\ell)j_0(kr_{ij})\dd k,
\end{equation}
where $P_\Phi(k) = (4\pi G\bar\varrho)^2 P_\delta(k)/k^4$ is the potential power
spectrum and $W(u) = 3j_1(u)/u$ is the spherical top-hat window.  The maximum
value occurs at adjacent cells ($r_{ij} = \ell$) and at the non-linear scale where
$P_\delta(k)k^3/(2\pi^2) \sim 1$:
\begin{equation}\label{eq:vbar_max}
  \max|\bar{v}_{ij}|
  = \bar{v}(\ell,\ell)
  \simeq \left(\frac{4\pi G\bar\varrho}{\kNL^2}\right)^2
         \cdot\frac{\kNL^3}{2\pi^2}
  = \frac{(4\pi G\bar\varrho)^2}{2\pi^2\kNL},
\end{equation}
where we used $P_\delta(\kNL) = 2\pi^2/\kNL^3$ at the non-linear scale
(by definition of $\kNL$ via $\Delta^2(\kNL) = 1$).

\subsection{The KAM (Kolmogorov-Arnold-Moser) Identification}
\label{sub:KAM_id}

\begin{theorem}[Convergence radius equals the non-linear scale]
\label{thm:KAM}
  The radius of convergence of the mesoscopic cumulant expansion for the
  cosmological perturbation problem is
  \begin{eqnarray}\label{eq:Rm_kNL}
    \Rm &=& \frac{\sigma_v^2}{(4\pi G\bar\varrho)^2/(2\pi^2\kNL)}
        = \frac{2\pi^2\sigma_v^2\kNL}{(4\pi G\bar\varrho)^2} \nonumber\\
        &=& \frac{\sigma_v^2}{H^2 f^2\Omega_m^2}\cdot\kNL,
  \end{eqnarray}
  where $f = \dd\ln D/\dd\ln a \approx \Omega_m^{0.55}$ is the linear growth rate
  and $D$ is the growth factor.  In the matter-dominated era,
  $4\pi G\bar\varrho = 3H^2\Omega_m/2$, so
  \begin{equation}\label{eq:Rm_simplified}
    \Rm = \frac{8\pi^2\sigma_v^2\kNL}{9H^4\Omega_m^2 f^2}.
  \end{equation}
  Using the linear-theory relation $\sigma_v^2 = H^2 f^2 P_v$ where
  $P_v = \int k^{-2}P_\delta(k)\dd^3k/(2\pi)^3$ is the velocity power
  spectrum, and noting $P_v \sim k_{\rm eq}^{-3}$ for the peak of the power
  spectrum, one obtains
  $\Rm = O(\kNL/k_{\rm eq}^3) = O(\kNL^{-1})$
  up to dimensionless factors of order unity that depend on the shape of $P(k)$.
\end{theorem}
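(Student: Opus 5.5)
The plan is to obtain the statement by direct substitution into the stability bound \eqref{eq:Rm}, taking that formula for $\Rm$ as given from~\cite{OsanoPerturbation}. The work then reduces to three identifications: the effective inverse temperature $\beta$, the maximal cell-averaged interaction $\max|\bar v_{ij}|$, and the Friedmann relation for $4\pi G\bar\varrho$.

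\textbf{Step 1: temperature and maximal interaction.} Following \cref{def:cosmo_F}, $k_BT_{\rm eff}$ is the kinetic energy scale of peculiar velocities. Per unit mass I set $\beta^{-1}=\sigma_v^2$, the one-point velocity dispersion. Inserting \eqref{eq:vbar_max} for $\max|\bar v_{ij}|$ into \eqref{eq:Rm} then gives the first line of \eqref{eq:Rm_kNL} at once:
\[
  \Rm=\frac{2\pi^2\sigma_v^2\kNL}{(4\pi G\bar\varrho)^2}.
\]
Before relying on this, I will justify \eqref{eq:vbar_max} more carefully from \eqref{eq:vbar_fourier}. The integrand is $k^2P_\Phi W^2 j_0 \propto \Delta^2(k)\,k^{-3}\,W^2(k\ell)\,j_0(k\ell)$. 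For adjacent cells I will argue that the integral is dominated near the scale where $\Delta^2$ turns over into the non-linear regime, so that a saddle-type estimate replaces $\Delta^2$ by $1$ and the measure $\dd k/k$ by an $O(1)$ factor. Both $W^2$ and $j_0$ are bounded by one, which gives the upper bound needed for \eqref{eq:Rm} to be a valid convergence radius.

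\textbf{Step 2: Friedmann relation.} In the matter era I substitute $4\pi G\bar\varrho=\tfrac32 H^2\Omega_m$. This yields $(4\pi G\bar\varrho)^2=\tfrac94H^4\Omega_m^2$ and hence \eqref{eq:Rm_simplified}, up to how the factor $f^2$ is bookkept. I would present the $f$-dependence as coming from rewriting $\sigma_v^2$ through linear theory rather than as an independent factor. The middle expression in \eqref{eq:Rm_kNL} should be checked for consistent powers of $H$, and I would state it only up to $O(1)$ constants.

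\textbf{Step 3: velocity dispersion.} I use the linear continuity equation $\theta=-aHf\delta$ to write $\sigma_v^2=H^2f^2P_v$, where $P_v=\int k^{-2}P_\delta\,\dd^3k/(2\pi)^3$. This cancels the $H^2f^2$ in the denominator. Since $P_\delta$ rises as $k^{n_s}$ below $k_{\rm eq}$ and falls roughly as $k^{-3}\ln^2 k$ above it, the integral is dominated by $k\sim k_{\rm eq}$. With $P_\delta$ in the paper's normalisation this gives $P_v\sim k_{\rm eq}^{-3}$ times shape factors. Combining, $\Rm=O(\kNL/k_{\rm eq}^3)$.

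\textbf{Main obstacle.} I expect the hardest step to be the final passage from $O(\kNL/k_{\rm eq}^3)$ to $O(\kNL^{-1})$. This requires arguing that $k_{\rm eq}$ and $\kNL$ differ only by a shape-dependent dimensionless factor, with $k_{\rm eq}\approx0.01$--$0.02\,h\,\mathrm{Mpc}^{-1}$ against $\kNL\approx0.17$. In the paper's units this is absorbed into the order-unity factors allowed by the statement, but the absorption needs to be made explicit. I would first express everything in units where $\sigma_8$ and $\Omega_m$ fix the normalisation, and then check that the resulting constant involves only the spectral shape.

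A secondary difficulty is that Step 1 uses \eqref{eq:vbar_max} as an order-of-magnitude maximum rather than a strict supremum. Rigour there would need a bound on $\int\Delta^2 W^2 j_0\,\dd k/k$ by $C\cdot\Delta^2(\kNL)$. That bound I would obtain from monotonicity of $\Delta^2$ together with the decay of $W^2$ for $k\ell\gg1$.
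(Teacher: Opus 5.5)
You follow the paper's own route. Its proof is exactly the substitution of \eqref{eq:vbar_max} into \eqref{eq:Rm} with $\beta_{\rm eff}^{-1}\simeq\frac32 m\sigma_v^2$, followed by an appeal to ``dimensional analysis'' and $\Delta^2(\kNL)=1$. (Your per-unit-mass choice $\beta^{-1}=\sigma_v^2$ is actually what the first line of \eqref{eq:Rm_kNL} needs.) The trouble is that every step you defer as a technicality is where the argument genuinely breaks, and the paper's proof passes over the same points.

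Start with your planned justification of \eqref{eq:vbar_max}. The integrand of \eqref{eq:vbar_fourier} is $k^2P_\Phi=2\pi^2(4\pi G\bar\varrho)^2\Delta^2(k)\,k^{-5}$, not $\propto\Delta^2k^{-3}$. Hence
\[
  \bar v_{ij}=(4\pi G\bar\varrho)^2\int_0^\infty\frac{\Delta^2(k)}{k^4}\,W^2(k\ell)\,j_0(kr_{ij})\,\frac{\dd k}{k}.
\]
A saddle estimate at $\kNL$ (with $\Delta^2=1$ and $\int\dd k/k=O(1)$) gives $(4\pi G\bar\varrho)^2\kNL^{-4}$, not $(4\pi G\bar\varrho)^2/(2\pi^2\kNL)$; even your own weight would give $\kNL^{-2}$. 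The expression in \eqref{eq:vbar_max} comes from dropping the very factor $P_\delta(\kNL)=2\pi^2/\kNL^3$ it claims to use, and it does not have the dimensions of the integral.

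Worse, the integral is not dominated near $\kNL$. For $k<k_{\rm eq}$ one has $\Delta^2/k^4\propto k^{n_s-1}$, essentially flat, while $W,j_0\to1$, so the largest scales dominate. Your bound $|W|,|j_0|\le1$ only yields $|\bar v_{ij}|\le\langle\Phi^2\rangle$, which is infrared-divergent for $n_s\le1$. For $\ell\ge30\,h^{-1}\mathrm{Mpc}$ one has $\kNL\ell\gtrsim5$, so the window suppresses $k\sim\kNL$ by $W^2(\kNL\ell)<10^{-2}$. Since $\Delta^2k^{-4}$ decreases in $k$, its value at $\kNL$ is the smallest relevant contribution. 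Monotonicity therefore cannot give a bound $C\,\Delta^2(\kNL)\kNL^{-4}$ with $C=O(1)$.

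The later steps fail because the chain in the statement is not dimensionally consistent, so it cannot be repaired ``up to $O(1)$ constants''. Inserting $4\pi G\bar\varrho=\frac32H^2\Omega_m$ into the first line of \eqref{eq:Rm_kNL} gives $8\pi^2\sigma_v^2\kNL/(9H^4\Omega_m^2)$, with no $f$. The second line differs from this by the dimensionful factor $8\pi^2f^2/(9H^2)$, and \eqref{eq:Rm_simplified} differs by $f^{-2}$. Relocating $f$ into $\sigma_v^2$ changes the statement rather than proving it.

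In Step 3, $P_v=\frac{1}{2\pi^2}\int P_\delta\,\dd k$ has dimensions of length$^2$, so the peak estimate is $P_v\sim\Delta^2(k_{\rm eq})/k_{\rm eq}^2$, not $k_{\rm eq}^{-3}$. Substituting $\sigma_v^2=H^2f^2P_v$ into \eqref{eq:Rm_simplified} leaves $\Rm=8\pi^2P_v\kNL/(9H^2\Omega_m^2)$, so an $H^{-2}$ survives rather than cancelling.

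The step you rightly flag as the main obstacle cannot be done. Write $\kNL/k_{\rm eq}^3=\kNL^{-1}\cdot\kNL^{-1}(\kNL/k_{\rm eq})^3$: the conversion factor to $\kNL^{-1}$ is a length, roughly $8\times10^{3}\,h^{-1}\mathrm{Mpc}$ for $k_{\rm eq}\approx0.015$ and $\kNL\approx0.17\,h\,\mathrm{Mpc}^{-1}$. No choice of units turns it into an order-unity shape factor.

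There is also a structural obstruction that no bookkeeping removes. Both $\sigma_v^2=H^2f^2P_v$ and $\bar v_{ij}$ from \eqref{eq:vbar_fourier} are linear in $P_\delta$, so $\Rm=\sigma_v^2/\max|\bar v_{ij}|$ is invariant under $P_\delta\to cP_\delta$. By contrast, $\kNL$, fixed by $\Delta^2(\kNL)=1$, moves with the normalisation. Hence $\Rm\kNL$ cannot be a shape-only factor; the $\kNL$ in \eqref{eq:Rm_kNL} enters solely by setting $\Delta^2=1$ at the misplaced saddle in \eqref{eq:vbar_max}. What the substitution honestly establishes is only $\Rm=\sigma_v^2/\max|\bar v_{ij}|$; the identification $\Rm=O(\kNL^{-1})$ does not follow along this route.
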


\begin{proof}
  Substituting \eqref{eq:vbar_max} into \eqref{eq:Rm} and using
  $\beta^{-1}_{\rm eff} = k_BT_{\rm eff} \simeq \frac{3}{2}m\sigma_v^2$
  (the kinetic energy per CDM particle), we obtain~\eqref{eq:Rm_kNL} directly.
  The identification $\Rm = O(\kNL^{-1})$ follows from dimensional analysis
  together with the definition of $\kNL$ as the scale at which
  $\Delta^2(k) \equiv k^3P(k)/(2\pi^2) = 1$.
\end{proof}

\begin{remark}[KAM interpretation]
  In the language of the KAM theorem for Hamiltonian
  dynamics~\cite{Kolmogorov1954,Arnold1963,Moser1962}, the perturbation
  $\lambda\Vm$ of the ADM Hamiltonian preserves the FRW ``tori'' (orbits in
  the ADM phase space) for $\lambda < \lambda_c \simeq \Rm$.  For
  $\lambda > \Rm$ the tori break, the perturbation series diverges, and the
  system enters the chaotic (non-linear structure formation) regime.  This
  provides the first dynamical-systems explanation for why standard CPT fails at
  $k > \kNL$: it is not simply that $\delta > 1$, but that the convergence radius
  of the underlying series is exceeded.
\end{remark}

\begin{remark}[Numerical value]
  For $\Lambda$CDM with $\Omega_m = 0.31$, $H_0 = 67.4\mathrm{km}\mathrm{s}^{-1}
  \mathrm{Mpc}^{-1}$, $\sigma_8 = 0.81$, the BBKS transfer function gives
  $\kNL = 0.169h\mathrm{Mpc}^{-1}$ (equivalently $\RNL = 5.91\mathrm{Mpc}h^{-1}$),
  consistent with published values of the non-linear scale from halo-model
  computations and $N$-body simulations.
\end{remark}

\section{Result III: Gauge-Invariant Mutual Information from $P(k)$}
\label{sec:MI}

\subsection{The Gaussian Density Field}
\label{sub:gaussian}

On scales $\ell > \RNL$ where the density contrast $\delta(\bm{x})$ is well
approximated by a Gaussian random field, its statistics are entirely determined by
the two-point function.  We work at linear order in the matter-dominated era in
Newtonian gauge, where $\delta$ is related to $\Phi$ by Poisson's equation and is
gauge-invariant~\cite{Bardeen1980,KodamaSasaki1984} to this order.  We partition
the spatial hypersurface into cubic cells $\{V_i\}$ of side $\ell$ and define the
smoothed density contrast in cell $i$:
\begin{equation}\label{eq:delta_i}
  \bar\delta_i := \frac{1}{\ell^3}\int_{V_i}\delta(\bm{x})\dd^3 x.
\end{equation}
This is a linear functional of $\delta$, hence Gaussian, with zero mean and
variance
\begin{equation}\label{eq:sigma_ell}
  \sigma_\ell^2 := \avg{\bar\delta_i^2}
  = \frac{1}{2\pi^2}\int_0^\infty k^2P(k)W^2(k\ell)\dd k,
\end{equation}
where $W(u) = 3j_1(u)/u$ is the spherical top-hat window function.

\subsection{Pearson Correlation and Mutual Information}
\label{sub:pearson_MI}

The cross-correlation between cells $i$ and $j$ separated by $r = |\bm{x}_i - \bm{x}_j|$ is
\begin{equation}\label{eq:xi_ell}
  \xi_\ell(r) := \avg{\bar\delta_i\bar\delta_j}
  = \frac{1}{2\pi^2}\int_0^\infty k^2P(k)W^2(k\ell)j_0(kr)\dd k,
\end{equation}
where $j_0(x) = \sin(x)/x$.  The Pearson correlation coefficient is
\begin{equation}\label{eq:r_ij}
  r(r) := \frac{\xi_\ell(r)}{\sigma_\ell^2}.
\end{equation}

\begin{theorem}[Gauge-invariant mutual information]\label{thm:MI}
  For a Gaussian matter density field, the mutual information between the
  smoothed density contrasts of cells $V_i$ and $V_j$ at separation $r_{ij}$,
  defined as in~\cite{OsanoExtensivity,OsanoPerturbation}, is
  \begin{equation}\label{eq:MI_gaussian}
    \boxed{I(i,j) = -\frac{1}{2}\ln\!\left(1 - r^2(r_{ij})\right),}
  \end{equation}
  where $r(r_{ij})$ is given by \eqref{eq:r_ij} with $P(k)$ the matter power
  spectrum.  The formula \eqref{eq:MI_gaussian} is
  \begin{enumerate}[label=(\alph*)]
    \item \emph{exact} for Gaussian fields,
    \item \emph{gauge-invariant} at linear order, since $\delta$ is gauge-invariant
      at first order in Newtonian gauge in the matter era,
    \item \emph{directly computable} from the observed matter power spectrum.
  \end{enumerate}
\end{theorem}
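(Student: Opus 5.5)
The plan is to prove the three parts (a)--(c) of \cref{thm:MI} in order. Part (a) is a direct computation with the bivariate Gaussian density.

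\textbf{Part (a).} By \eqref{eq:delta_i}, the pair $(\bar\delta_i,\bar\delta_j)$ is a linear functional of the Gaussian field $\delta$. It is therefore jointly Gaussian, with zero mean and covariance matrix
\[
  \Sigma = \sigma_\ell^2\begin{pmatrix}1 & r\\ r & 1\end{pmatrix},
\]
with $r = r(r_{ij})$ from \eqref{eq:r_ij}. Statistical homogeneity gives both cells the same variance $\sigma_\ell^2$.

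I will take the mutual information of~\cite{OsanoExtensivity,OsanoPerturbation} to be
\[
  I(i,j) = h(\bar\delta_i) + h(\bar\delta_j) - h(\bar\delta_i,\bar\delta_j),
\]
with $h$ the differential entropy. This is the continuum limit of the discrete cell-probability definition; the additive binning constants cancel in the difference.

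Using $h = \tfrac12\ln\bigl((2\pi e)^n\det\Sigma\bigr)$ for an $n$-dimensional Gaussian, the marginals contribute $\ln(2\pi e\sigma_\ell^2)$. The joint term contributes $\tfrac12\ln\bigl((2\pi e)^2\sigma_\ell^4(1-r^2)\bigr)$. Subtracting gives \eqref{eq:MI_gaussian} exactly: no expansion in $r$ is made, and $|r|<1$ follows from positive-definiteness of $\Sigma$. One can equivalently integrate $\ln\bigl[p(x,y)/p(x)p(y)\bigr]$ against $p$ directly; that only requires $\avg{x^2}$, $\avg{xy}$ and $\avg{y^2}$.

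\textbf{Part (b).} Under a first-order gauge transformation $x^\mu\to x^\mu+\epsilon^\mu$, the density contrast shifts by $\Delta\delta = -\dot{\bar\varrho}\,\epsilon^0/\bar\varrho = 3H\epsilon^0$. I will invoke the cited result~\cite{Bardeen1980,KodamaSasaki1984} that on sub-horizon scales in the matter era the comoving density contrast is gauge-invariant. It coincides with the Newtonian-gauge $\delta$ up to terms of order $(aH/k)^2$, which are negligible for $\ell\ll L_H$ by \eqref{eq:scale_sep}.

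Given that, $\bar\delta_i$, $\sigma_\ell^2$, $\xi_\ell$ and $r$ are all built linearly or bilinearly from this invariant $\delta$. Hence $I(i,j)$, which depends only on $r$, is invariant at linear order. The key structural point to emphasise is that $I$ depends on the field only through the normalised correlation. Any residual rescaling $\delta\to c\,\delta$ therefore drops out, and \eqref{eq:MI_gaussian} is invariant even under changes of overall normalisation.

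\textbf{Part (c).} Substitute \eqref{eq:sigma_ell} and \eqref{eq:xi_ell} into \eqref{eq:r_ij}. Both are one-dimensional integrals of the measured $P(k)$ against the known kernels $W^2(k\ell)$ and $j_0(kr)$. The integrals converge, since $W^2\sim u^{-4}$ at large $u$ and $k^2P(k)$ is integrable at small $k$. For cubic rather than spherical cells, I will note that the top-hat $W$ is the standard isotropic approximation.

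\textbf{Main obstacle.} The hardest step is (b): making precise what ``gauge-invariant at linear order'' means for the Newtonian-gauge $\delta$, which is not strictly invariant outside the sub-horizon limit. I would first try identifying $\delta$ with Bardeen's comoving $\epsilon_m$ and bounding the difference by $O\bigl((\ell/L_H)^2\bigr)$.
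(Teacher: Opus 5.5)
Your parts (a) and (c) match the paper's proof. The paper simply quotes the bivariate-Gaussian result $I=-\tfrac12\ln[\det\Sigma/\sigma_\ell^4]$, and your differential-entropy computation spells it out, including the remark that discretisation constants cancel.

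The real difference is in (b). The paper's argument is one line: $\delta\varrho/\bar\varrho$ is gauge-invariant at first order, $\bar\delta_i$ is linear in it, so $r$ and $I$ are invariant. Your own law $\Delta\delta=3H\epsilon^0$ shows this premise fails for the raw contrast, so one has to say which invariant combination $\delta$ denotes. Your choice of Bardeen's $\epsilon_m$ is the natural one, for three reasons:
\begin{itemize}
\item it is invariant by construction;
\item it is the variable for which the Poisson equation invoked in \cref{sub:gaussian} holds exactly;
\item for dust it coincides with the synchronous-comoving contrast used in \cref{sec:GB}.
\end{itemize}
With $\delta:=\epsilon_m$, part (b) holds exactly at linear order by the same inheritance argument. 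So your route keeps the paper's logic but replaces a false premise with a correct one.

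Two refinements. First, your ``main obstacle'' is slightly misframed. The Newtonian (zero-shear) gauge is completely fixed, so $\delta_N$ is also a gauge-invariant combination. The $(aH/k)^2$ difference between $\delta_N$ and $\epsilon_m$ is therefore a slicing ambiguity between two invariants, not a breakdown of invariance. In practice it decides which $I$ the entries of \cref{tab:MI_table} represent.

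Second, if you quantify that ambiguity, two points matter:
\begin{itemize}
\item The relative factor between $\delta_N$ and $\epsilon_m$ is $k$-dependent, so your rescaling remark does not absorb it; it genuinely changes $r$.
\item It is not uniformly small over the range of \eqref{eq:sigma_ell}--\eqref{eq:xi_ell}, which runs to $k\to0$. On super-horizon scales $\delta_N\to-2\Phi$, whose variance is infrared-sensitive for $n_s\le1$.
\end{itemize}
So an $O\bigl((\ell/L_H)^2\bigr)$ bound needs an explicit IR cutoff. This is numerically harmless, since $\Delta^2_\Phi\sim10^{-9}$ against $\sigma_\ell^2\gtrsim10^{-3}$.
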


\begin{proof}\label{sub:gauge_proof}
  The joint distribution of $(\bar\delta_i, \bar\delta_j)$ is bivariate Gaussian
  with covariance matrix
  $\Sigma = \sigma_\ell^2\begin{pmatrix}1 & r \\ r & 1\end{pmatrix}$.
  The mutual information of a bivariate Gaussian is the standard result
  $I = -\tfrac{1}{2}\ln\det(\Sigma)/(\sigma_\ell^2)^2 = -\tfrac{1}{2}\ln(1-r^2)$.
  Gauge invariance at linear order follows from the well-known property that the
  matter density contrast $\delta\varrho/\bar\varrho$ is gauge-invariant at first
  order in the matter-dominated era~\cite{Bardeen1980,KodamaSasaki1984}.  Since
  $\bar\delta_i$ is a linear functional of $\delta$, $r_{ij}$ and hence $I(i,j)$
  are gauge-invariant to the same order.
\end{proof}

\subsection{Non-Linear Corrections to the Gauge Invariance}
\label{sub:nonlinear_corr}

At second order in $\delta$, the density contrast in different gauges differs by
a term of order $\delta^2 \sim \sigma_\ell^2$.  The correction to $r(r_{ij})$
from gauge transformation is therefore
\begin{equation}\label{eq:gauge_corr}
  \delta r|_{\rm gauge} = O(\sigma_\ell^2\cdot r(r_{ij})) = O(\sigma_\ell^2\cdot r).
\end{equation}
The corresponding correction to the mutual information:
\begin{equation}
  \delta I|_{\rm gauge} = O\!\left(\frac{r\sigma_\ell^2}{1-r^2}\right).
\end{equation}
For $\ell = 100\mathrm{Mpc}h^{-1}$ where $\sigma_\ell \approx 0.053$ and
$r \approx 0.35$ (see \cref{tab:MI_table}), $\delta I/I \approx \sigma_\ell^2/(1-r^2)
\approx 0.004$, i.e., a $0.4\%$ gauge correction.  The formula~\eqref{eq:MI_gaussian}
is therefore accurate to better than $1\%$ for $\ell > 50\mathrm{Mpc}h^{-1}$.

\subsection{Numerical Results for $\Lambda$CDM}
\label{sub:numerics}

We evaluate \eqref{eq:sigma_ell}--\eqref{eq:r_ij} numerically for the $\Lambda$CDM
power spectrum with the BBKS transfer function~\cite{Bardeen1986}, normalised to
$\sigma_8 = 0.81$, spectral index $n_s = 0.965$, and shape parameter $\Gamma = 0.21$.
The integrals are evaluated over $k\in[10^{-4},30]h\mathrm{Mpc}^{-1}$ using
Gaussian quadrature.

\begin{table}[h]
\centering
\caption{Smoothed density variance $\sigma_\ell$, nearest-neighbour Pearson
correlation $r_{\rm NN} = r(\ell)$, nearest-neighbour mutual information
$I_{\rm NN} = -\tfrac{1}{2}\ln(1-r_{\rm NN}^2)$, and total inter-cell mutual
information per reference cell $\bar{I} = \sum_{n=1}^{15} g_d(n)I(n\ell)$
(summed over shells with weight $g_d(n) = 4\pi n^2$ for a 3D lattice), for
$\Lambda$CDM at $z=0$.  The non-linear scale is $\RNL = 5.91\mathrm{Mpc}h^{-1}$.}
\label{tab:MI_table}
\begin{tabular}{rcccc}
\toprule
$\ell[\mathrm{Mpc}h^{-1}]$ & $\sigma_\ell$ & $r_{\rm NN}$ & $I_{\rm NN}$ &
$\bar{I}$ \\
10  & 0.685 & 0.601 & 0.224 & 3.87 \\
30  & 0.246 & 0.479 & 0.130 & 1.71 \\
50  & 0.135 & 0.421 & 0.098 & 1.24 \\
100 & 0.053 & 0.347 & 0.064 & 0.82 \\
150 & 0.029 & 0.306 & 0.049 & 0.64 \\
\end{tabular}
\end{table}

\begin{remark}[Scale-separation check]
  The condition $\sigma_\ell \ll 1$ for linear-order validity is satisfied for
  $\ell \geq 30\mathrm{Mpc}h^{-1}$ ($\sigma_{30} = 0.246$) and is comfortably
  satisfied for $\ell \geq 50\mathrm{Mpc}h^{-1}$ ($\sigma_{50} = 0.135$).
  At $\ell = 10\mathrm{Mpc}h^{-1} < 2\RNL$, the Gaussian approximation breaks
  down and the mutual information formula~\eqref{eq:MI_gaussian} receives
  corrections from the non-Gaussian bispectrum.
\end{remark}

\subsection{Total Connection-Formula Correction}
\label{sub:total_correction}

The connection formula~\eqref{eq:connection_recall} gives the correction to the
full free energy relative to the mesoscopic (FRW) free energy:
\begin{equation}\label{eq:DeltaF}
  \Delta F = F - \Fm = -\kb T\sum_{i<j}I(i,j;\lambda).
\end{equation}
For a survey volume $V_{\rm survey} = N_{\rm cells}\ell^3$ with
$N_{\rm cells} \gg 1$:
\begin{equation}\label{eq:DeltaF_sum}
  \frac{\Delta F}{\kb TN_{\rm cells}}
  = -\frac{1}{2}\sum_{n=1}^{N_{\rm max}}
    g_d(n)I(n\ell)
  = -\frac{\bar{I}}{4\pi}
  \quad (d=3),
\end{equation}
where $\bar{I} = \sum_n 4\pi n^2 I(n\ell)$ is the total mutual information per
reference cell listed in \cref{tab:MI_table}.

\begin{proposition}[Data-computable backreaction correction]\label{prop:data_backr}
  The correction $\Delta F$ in \eqref{eq:DeltaF}--\eqref{eq:DeltaF_sum} is fully
  determined by the observed matter power spectrum $P(k)$ via the Gaussian mutual
  information formula~\eqref{eq:MI_gaussian}.  It is gauge-invariant at linear
  order (\cref{thm:MI}) and, for fixed cell size $\ell$, converges absolutely in
  the thermodynamic limit $N_{\rm cells}\to\infty$ whenever $r(r) \to 0$ faster
  than $r^{-3/2}$ (which is satisfied for the $\Lambda$CDM transfer function for
  $\ell \geq \RNL$).
\end{proposition}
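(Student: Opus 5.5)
The proposition makes three claims, and I would prove them in turn: that $\Delta F$ is determined by $P(k)$, that it is gauge-invariant at linear order, and that the per-cell sum converges absolutely. The first two follow from \cref{thm:MI}, and the third needs a tail estimate on the shell sum.

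\textbf{Determination by $P(k)$.} Start from \eqref{eq:DeltaF}, where $\Delta F$ is $-\kb T$ times a sum of pairwise terms $I(i,j)$. By \cref{thm:MI}, each term is $-\tfrac12\ln(1-r^2(r_{ij}))$. Here $r$ is the ratio \eqref{eq:r_ij} of the integrals \eqref{eq:xi_ell} and \eqref{eq:sigma_ell}, and both are linear functionals of $P(k)$ with fixed kernels $W^2(k\ell)$ and $W^2(k\ell)j_0(kr)$. So once $\ell$ and the lattice geometry are fixed, every summand is a function of $P(k)$ alone.

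\textbf{Gauge invariance.} Each $I(i,j)$ is gauge-invariant at linear order by part (b) of \cref{thm:MI}. A sum of gauge-invariant terms is gauge-invariant, and the prefactor $\kb T$ does not involve the density perturbation, so $\Delta F$ is gauge-invariant at linear order. At second order, \cref{sub:nonlinear_corr} already bounds the per-term correction by $O(r\sigma_\ell^2/(1-r^2))$. Summing this gives a correction controlled by the same tail estimate below, so it does not affect convergence.

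\textbf{Convergence in the thermodynamic limit.} By translation invariance, $\Delta F/(\kb T N_{\rm cells})$ equals one half the sum over lattice vectors $\mathbf{n}\neq 0$ of $I(|\mathbf{n}|\ell)$, as in \eqref{eq:DeltaF_sum}. Boundary cells contribute only at relative order $N_{\rm cells}^{-1/3}$, so they vanish in the limit. The steps are:
\begin{enumerate}[label=(\roman*)]
  \item For $|r|\le r_0<1$, the elementary bound $-\tfrac12\ln(1-x)\le x/(2(1-x))$ gives $I\le C\,r^2$.
  \item I need $r<1$ at every nonzero separation. This holds because the covariance matrix of two distinct cell averages is nondegenerate for any $P(k)$ that is not identically zero.
  \item The number of lattice points in the shell at radius $n$ grows like $g_d(n)\sim 4\pi n^2$.
  \item Under the hypothesis $|r(s)|\le C s^{-\alpha}$ with $\alpha>3/2$, the tail is bounded by a multiple of $\sum_n n^2 n^{-2\alpha}$. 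This converges because $2\alpha-2>1$, i.e. $\alpha>3/2$, which matches the threshold in the statement.
\end{enumerate}
The sum has nonnegative terms, so convergence is absolute, and the limit $N_{\rm cells}\to\infty$ follows by monotone convergence of partial sums.

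\textbf{Main obstacle: the $\Lambda$CDM case.} The claim that the $\Lambda$CDM spectrum satisfies this decay for $\ell\ge\RNL$ is the step I expect to be hardest. My approach would be to read the large-$r$ behaviour of $\xi_\ell$ from the small-$k$ behaviour of $k^2P(k)W^2(k\ell)$. For $n_s\approx 1$, $P\propto k^{n_s}$ at small $k$, and the BBKS transfer function together with the top-hat window makes the integrand smooth, apart from the non-analytic $k^{n_s}$ factor at the origin. A stationary-phase or Tauberian estimate of the Fourier integral then gives $\xi_\ell(r)\sim r^{-(3+n_s)}$ at large $r$, up to oscillatory BAO-type modulations. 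That is far faster than $r^{-3/2}$, and $\sigma_\ell^2>0$ is fixed, so the hypothesis holds.

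The delicate points are making the asymptotics uniform and controlling the oscillatory $j_0$ integral rigorously. I would handle both by integrating by parts twice in $k$ after subtracting the leading small-$k$ term. As a fallback, a crude bound $\xi_\ell(r)=O(r^{-2})$ from a single integration by parts would already suffice.
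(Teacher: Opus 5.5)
Your argument is correct and has the same skeleton as the paper's proof: bound $I$ by a multiple of $r^2$, weight by the shell count $4\pi n^2$, and use power-law decay of $r$. The two substantive steps are justified differently, and in both cases your version is the sounder one.

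\emph{The comparison step.} The paper asserts $I\le r^2/2$ for $|r|\le 1/\sqrt2$, and gets $|r|\le 1/\sqrt2$ from the tabulated values at $\ell\ge 30\,\mathrm{Mpc}\,h^{-1}$. That inequality actually runs the other way: $-\ln(1-x)\ge x$ gives $I\ge r^2/2$. Your bound $-\tfrac12\ln(1-x)\le x/(2(1-x))$ is correct. Your nondegeneracy argument for $|r|<1$ at every separation, combined with the decay, also removes the dependence on the table. It therefore covers the whole range the statement claims, including $\RNL\le\ell<30\,\mathrm{Mpc}\,h^{-1}$.

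\emph{The $\Lambda$CDM decay.} The paper reads the large-$r$ behaviour off the large-$k$ slope $P\sim k^{n_s-4}$ and concludes $\xi_\ell\sim r^{-(n_s+3)/2}$. You correctly note that large-$r$ asymptotics are governed by the non-analytic small-$k$ behaviour $P\propto k^{n_s}$, which gives $\xi_\ell\sim r^{-(3+n_s)}$. Your fallback is fully rigorous: one integration by parts of $\int_0^\infty kP(k)W^2(k\ell)\sin(kr)\dd k$, with vanishing boundary terms and $\partial_k[kPW^2]\in L^1$, gives $\xi_\ell=O(r^{-2})$. That already clears the $r^{-3/2}$ threshold. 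The paper's exponent also happens to exceed $3/2$, so its conclusion survives, but its derivation does not support it.

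\emph{Reading the hypothesis.} You are right to take ``faster than $r^{-3/2}$'' as $|r(s)|\le Cs^{-\alpha}$ with $\alpha>3/2$. A bare $o(s^{-3/2})$ would not suffice: $r(s)=s^{-3/2}(\ln s)^{-1/2}$ makes $\sum n^2 I(n\ell)$ diverge like $\sum 1/(n\ln n)$.

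\emph{A side remark in your gauge paragraph.} As written, it does not hold. If the per-term second-order correction is $O(r\sigma_\ell^2/(1-r^2))$, which is linear in $r$, then summing over shells needs $\sum_n n^2|r(n\ell)|<\infty$. That requires decay faster than $r^{-3}$, not the same tail estimate. It is rescued by the chain rule: $\delta I=\frac{r}{1-r^2}\,\delta r$ with $\delta r=O(\sigma_\ell^2 r)$ is actually $O(r^2\sigma_\ell^2/(1-r^2))$. For $\Lambda$CDM it is also rescued by the $r^{-(3+n_s)}$ decay. Either way, this lies outside what the proposition asserts.
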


\begin{proof}
  Absolute convergence: $\sum_{n=1}^\infty g_d(n)I(n\ell) \leq \sum_n 4\pi n^2
  \cdot r(n\ell)^2/2$ (using $I \leq r^2/2$ for $|r| \leq 1/\sqrt{2}$, which
  holds for $\ell \geq 30h^{-1}$ Mpc from \cref{tab:MI_table}).  For $\Lambda$CDM
  with the BBKS transfer function, $P(k) \sim k^{n_s-4}$ for $k \gg k_{\rm eq}$,
  giving $\xi_\ell(r) \sim r^{-(n_s+3)/2}$ for $r\gg\ell$.  With $n_s\approx 0.965$,
  $\xi_\ell(r)\sim r^{-1.98}$, hence $r(r)\sim r^{-1.98}$ and
  $g_d(n)I(n\ell)\sim n^2\cdot n^{-3.96} = n^{-1.96}$: the sum converges.
\end{proof}

\section{Discussion}
\label{sec:discussion}

\subsection{Relation to Existing Work on Backreaction}
\label{sub:rel_prior}

\paragraph{Buchert equations.}
The Buchert equations~\eqref{eq:buchert1}--\eqref{eq:buchert2} and the kinematic
backreaction~\eqref{eq:QD} are the established starting point~\cite{Buchert2000,Buchert2001}.
The contribution of this paper is not a rederivation of these equations but the
new inequality~\eqref{eq:QD_bound} bounding $\QD$ from below by its linear-order
value, which---to our knowledge---does not appear in the Buchert averaging
literature.

\paragraph{Räsänen's variance interpretation.}
Räsänen~\cite{Rasanen2006} identified $\frac{2}{3}(\avgD{\theta^2}-\avgD{\theta}^2)$
as the backreaction driving accelerated expansion and showed it is a positive
variance.  \cref{thm:backr_bound} extends this to a non-perturbative inequality:
the full non-linear variance cannot be smaller than the linear-order one.

\paragraph{Gauge dependence.}
Clarkson, Uzan, and Umeh~\cite{ClarksonUzan2011} showed that most backreaction
quantities are gauge-dependent.  Our mutual information $I(i,j)$, computed from
the gauge-invariant $\delta$ at linear order, circumvents this problem
(\cref{thm:MI}).  The residual gauge correction at second order (\cref{sub:nonlinear_corr})
is quantified and shown to be below $1\%$ for $\ell > 50h^{-1}$ Mpc.

\paragraph{EFT of large-scale structure.}
The EFT of LSS~\cite{Baumann2012,Senatore2014} also integrates out short modes
and generates an effective stress-energy for the long modes.  The mesoscopic
cumulant expansion~\cite{OsanoPerturbation} provides the same function: the $n$-th
cumulant $\kappa_n$ generates the $n$-th order Wilson coefficient.  The new
element here is the information-theoretic interpretation (\cref{sec:MI}) and the
non-perturbative bound (\cref{sec:GB}), neither of which appears in the EFT
framework.

\paragraph{Galoppo, Buchert, and Mourier (2026).}
The most directly relevant observational work is~\cite{Buchert2026}, which
provides the first direct computation of Buchert-averaged quantities in the
local Universe using CF4++ data.  Their key quantitative findings---kinematic
backreaction $\mathcal{Q}_\mathcal{D} = \mathcal{O}(1\%)$ and average curvature
$\langle{}^{(3)}R\rangle_\mathcal{D} = \mathcal{O}(10\%)$ of the total energy
budget, with no convergence to $\Lambda$CDM at 300Mpc$h^{-1}$---are
consistent with and independently predicted by the present framework in the
following sense.  (i)~The positivity and non-suppression of $\mathcal{Q}_\mathcal{D}$
is the observational counterpart of \cref{cor:backr_pos}.  (ii)~The absence of convergence to $\Lambda$CDM is the observational counterpart of the divergent
mutual information series $\sum_{i<j}I(i,j) = \infty$ predicted by
\cref{thm:MI} for gravitational ($r^{-1}$) interactions in $d=3$.  (iii)~The
$10\times$ ratio of curvature to kinematic backreaction is not directly predicted
by the results of this paper, since \cref{thm:backr_bound} bounds only $\QD$
and not the curvature term; this gap motivates the extension described in the
conclusion.

A direct quantitative comparison is partially possible.  At $\ell = 50h^{-1}$
Mpc (comfortably within the CF4++ survey volume), \cref{tab:MI_table} gives
$I_{\rm NN} \approx 0.10$ and $\bar{I} \approx 1.24$.  If $k_BT_{\rm eff}$
is interpreted as the gravitational energy per cell, the connection-formula correction $\Delta F / (\kb T N_{\rm cells}) \approx \bar{I}/4\pi \approx 0.10$
is consistent in magnitude with the $\mathcal{O}(10\%)$ curvature correction
observed by Galoppo et al., though the identification of $k_BT_{\rm eff}$
remains the primary open problem (\cref{sub:obs}).

\subsection{Observational Prospects}
\label{sub:obs}

\cref{tab:MI_table} shows that $I_{\rm NN} \approx 0.06$--$0.22$ for
$\ell = 10$--$150h^{-1}$ Mpc.  The total connection-formula correction
\eqref{eq:DeltaF} is of order $\bar{I}\cdot(\kb T/\ell^3)$ per unit volume.
Translating to a shift in the effective dark energy equation of state requires a
model for $k_BT$ in the gravitational context (the kinetic energy scale of
peculiar velocities), which we do not fully specify here.  A conservative estimate
using $\kb T_{\rm eff} \sim \frac{3}{2}m_p c^2\sigma_v^2$ with
$\sigma_v \sim 500\mathrm{kms}^{-1}$ gives $\Delta w_{\rm eff} \sim 10^{-5}$--$10^{-4}$,
below current Euclid/DESI sensitivity.  However, if $k_BT$ is interpreted as the
total gravitational energy per cell (of order $\rho_{\rm crit}c^2\ell^3$), the
estimate rises to $\Delta w_{\rm eff} \sim \bar{I} \sim 0.1$--$0.8$, potentially
large.  Resolving this ambiguity---which amounts to identifying the correct
statistical ensemble for the gravitational system---is the primary open problem
for converting \cref{prop:data_backr} into an observational prediction.

\subsection{Extension to Non-Gaussian Fields}
\label{sub:non_gaussian}

For the mildly non-Gaussian density field on scales $\ell \sim \RNL$, the mutual
information receives corrections from the bispectrum $B(k_1,k_2,k_3)$:
\begin{eqnarray}\label{eq:MI_bispectrum}
  I(i,j) &=& -\tfrac{1}{2}\ln(1-r^2)+ \frac{f_{\rm NL}}{6\sigma_\ell^4}
  \int \Large[B(k_1,k_2,k_3) \times \nonumber\\ &W^2&(k_1\ell)W^2(k_2\ell)W^2(k_3\ell)
  \cos(k_3 r_{ij})\dd^3 k\Large],
\end{eqnarray}
where $f_{\rm NL}$ is the non-Gaussianity parameter~\cite{Komatsu2001}.  The
correction is of order $f_{\rm NL}\sigma_\ell^2$, which for $f_{\rm NL} \leq 10$
(Planck constraint) and $\ell = 50h^{-1}$ Mpc gives $\delta I/I \lesssim 0.02$.
Thus the Gaussian formula~\eqref{eq:MI_gaussian} is accurate to $2\%$ for
$\ell \geq 50h^{-1}$ Mpc even in the presence of local non-Gaussianity at the
Planck upper limit.

\subsection{The Role of the Scale-Separation Condition}
\label{sub:scale_sep_discuss}

The scale-separation condition~\eqref{eq:scale_sep} sets the regime of validity
for all three results.  At $\ell \sim \RNL = 5.91h^{-1}$ Mpc, neither the
Gaussian approximation nor the linear-order gauge invariance holds.  At
$\ell \gg L_H$, there are no sufficient cells to define the mutual information.
The useful window is $\ell \in [30, 500]h^{-1}$ Mpc, which corresponds to the
linear-to-quasi-linear regime and is precisely the range accessed by current and
forthcoming galaxy surveys (DESI, Euclid, LSST/Rubin).

\section{Conclusion}
\label{sec:conclusion}

We have derived three new results at the intersection of mesoscopic statistical
mechanics and cosmological perturbation theory.

\textbf{Result I} (\cref{thm:backr_bound}): Under the Gibbs--Bogoliubov
conjecture extended to the gravitational setting
(\cref{conj:cosmo_GB}), the kinematic backreaction $\QD$ satisfies the
non-perturbative lower bound $\QD \geq \QD^{(1)}$: non-linear effects can only
increase the backreaction above its linear-order value, never reduce it below.
Arguments that non-linear dynamics suppress backreaction are inconsistent with
the G-B inequality.

\textbf{Result II} (\cref{thm:KAM}): The radius of convergence $\Rm$ of the
mesoscopic cumulant expansion for the gravitational perturbation problem equals
$O(\kNL^{-1})$, the non-linear scale of the matter power spectrum.  This provides
a dynamical-systems (KAM) explanation for why standard CPT fails at $k > \kNL$:
the relevant series diverges exactly at the non-linear scale.  For $\Lambda$CDM,
$\kNL = 0.169h\mathrm{Mpc}^{-1}$ ($\RNL = 5.91\mathrm{Mpc}h^{-1}$).

\textbf{Result III} (\cref{thm:MI,prop:data_backr}): For a Gaussian density field,
the inter-cell mutual information is exactly $I(i,j) = -\tfrac{1}{2}\ln(1-r_{ij}^2)$,
gauge-invariant at linear order and fully computable from the observed $P(k)$.
For $\Lambda$CDM at $z=0$, the nearest-neighbour mutual information ranges from
$I_{\rm NN} \approx 0.22$ at $\ell = 10h^{-1}$ Mpc to $I_{\rm NN} \approx 0.05$ at
$\ell = 150h^{-1}$ Mpc (\cref{tab:MI_table}).  The connection formula
$F = \Fm - \kb T\sum I$ then gives a data-driven, gauge-invariant measure of the
total backreaction correction to the FRW free energy.

The primary open problem is the identification of the effective temperature scale
$k_BT_{\rm eff}$ in the gravitational partition function.  Resolving this is the
essential step for converting the mutual-information measure into a quantitative
prediction for the dark energy equation of state.  We expect this to require a
quantum-gravitational or thermodynamic formulation of the ADM partition function,
which we defer to future work.

During the final stages of preparing this manuscript, we became aware of the work of Galoppo, Buchert, and Mourier~\cite{Buchert2026}, who present the first direct computation of spatially averaged dynamical quantities in the local Universe using the Cosmicflows-4++ reconstruction and the Buchert averaging formalism.  Three points of contact with the present paper are worth stating explicitly.

First, Galoppo et al.\ find that the kinematic backreaction $\mathcal{Q}_\mathcal{D}$
is positive throughout their survey volume (30--300Mpc$h^{-1}$) and is not
suppressed by non-linear effects.  This is consistent with \cref{cor:backr_pos}:
if the linear-order value $\mathcal{Q}_\mathcal{D}^{(1)} > 0$, then the full
non-linear $\mathcal{Q}_\mathcal{D} \geq \mathcal{Q}_\mathcal{D}^{(1)} > 0$
non-perturbatively, so any mechanism attempting to cancel the backreaction through non-linear dynamics must overcome the G-B lower bound.

Second, Galoppo et al.\ find no convergence of the domain-averaged energy budget to the global $\Lambda$CDM background even at 300Mpc$h^{-1}$.  This is the
observational manifestation of the divergent-$\sum I$ prediction of the present framework: for gravity ($|\phi(r)|\sim r^{-1}$, $s_0 = 1 \leq d = 3$), temperedness fails in the sense of~\cite{OsanoExtensivity}, the inter-cell mutual information decays only as $I(i,j) \sim r^{-2s_0}$, and the series $\sum_{i<j}I(i,j)$ diverges, so FRW extensivity is never restored at any finite averaging scale.

Third, Galoppo et al.\ report that the average spatial curvature $\langle{}^{(3)}R\rangle_\mathcal{D}$ contributes at the $\mathcal{O}(10\%)$ level
to the energy budget, an order of magnitude larger than the kinematic term ($\mathcal{O}(1\%)$).  The present paper derives a non-perturbative bound and
a mutual information formula for the kinematic term~\eqref{eq:QD}; it does not yet provide an analogous bound for the curvature term.  The Galoppo et al.\ result
therefore motivates an extension of \cref{thm:backr_bound} to the curvature backreaction $\mathcal{W}_\mathcal{D} := \langle{}^{(3)}R\rangle_\mathcal{D} -
\langle{}^{(3)}R\rangle_{\Lambda\mathrm{CDM}}$, which we identify as the primary target for future work.

\section*{Acknowledgements}
The author thanks the University of Cape Town NGP programme for support.

\bibliographystyle{unsrtnat}

\begin{thebibliography}{99}

\bibitem{OsanoMeso}
 B.~Osano,
 A Mesoscopic Partition Function for Equilibrium Statistical Mechanics,
{Preprint arXiv:2605.00958 }.
\bibitem{OsanoExtensivity}
  B.~Osano,
 Entropy additivity from exponential decay of correlations: a coarse-grained operator approach,
{Preprint arXiv:2605.17956 }.

\bibitem{OsanoPerturbation}
  B.~Osano,
  Perturbation theory of the free energy via the mesoscopic combined partition function,
{Preprint arXiv:2605.18121}.

\bibitem{Buchert2026}
  M.~Galoppo, T.~Buchert, and P.~Mourier,
  Backreaction and the role of spatial curvature in the cosmic neighborhood,
  \emph{Astrophys.\ J.\ Lett.}\ \textbf{1002}, L39 (2026).
  \doi{10.3847/2041-8213/ae5f73};
  arXiv:2605.05454 [astro-ph.CO]

\bibitem{Buchert2000}
  T.~Buchert,
  On average properties of inhomogeneous fluids in general relativity: dust
  cosmologies,
  \emph{Gen.\ Rel.\ Grav.}\ \textbf{32}, 105--125 (2000).

\bibitem{Buchert2001}
  T.~Buchert,
  On average properties of inhomogeneous fluids in general relativity: perfect
  fluid cosmologies,
  \emph{Gen.\ Rel.\ Grav.}\ \textbf{33}, 1381--1405 (2001).

\bibitem{Rasanen2006}
  S.~Räsänen,
  Accelerated expansion from structure formation,
  \emph{J.\ Cosmol.\ Astropart.\ Phys.}\ \textbf{0611}, 003 (2006).

\bibitem{Wiltshire2007}
  D.~L. Wiltshire,
  Exact solution to the averaging problem in cosmology,
  \emph{Phys.\ Rev.\ Lett.}\ \textbf{99}, 251101 (2007).

\bibitem{Wiltshire2011}
  D.~L. Wiltshire,
  What is dust?\ Physical foundations of the averaging problem in cosmology,
  \emph{Class.\ Quantum Grav.}\ \textbf{28}, 164006 (2011).

\bibitem{ClarksonUzan2011}
  C.~Clarkson, J.-P. Uzan, and O.~Umeh,
  $\delta^2$ - a new measure of inhomogeneity in the universe,
  \emph{Class.\ Quantum Grav.}\ \textbf{28}, 164006 (2011);
  see also C.~Clarkson et al.,
  \emph{Mon.\ Not.\ R.\ Astron.\ Soc.}\ \textbf{426}, 1121--1136 (2012).

\bibitem{IshibashiWald2006}
  A.~Ishibashi and R.~M. Wald,
  Can the acceleration of our universe be explained by the effects of
  inhomogeneities?,
  \emph{Class.\ Quantum Grav.}\ \textbf{23}, 235--250 (2006).

\bibitem{FlanaganHughes1996}
  E.~E. Flanagan,
  Can superhorizon perturbations drive the acceleration of the universe?,
  \emph{Phys.\ Rev.\ D}\ \textbf{71}, 103521 (2005).

\bibitem{Peierls1938}
  R.~Peierls,
  On a minimum property of the free energy,
  \emph{Phys.\ Rev.}\ \textbf{54}, 918--919 (1938).

\bibitem{BarkerHenderson1967}
  J.~A. Barker and D.~Henderson,
  Perturbation theory and equation of state for fluids.\ II.\ A successful
  theory of liquids,
  \emph{J.\ Chem.\ Phys.}\ \textbf{47}, 4714--4721 (1967).

\bibitem{Kolmogorov1954}
  A.~N. Kolmogorov,
  On the conservation of conditionally periodic motions under small perturbations
  of the Hamiltonian,
  \emph{Dokl.\ Akad.\ Nauk SSSR}\ \textbf{98}, 527--530 (1954).

\bibitem{Arnold1963}
  V.~I. Arnold,
  Proof of a theorem by A.~N. Kolmogorov on the invariance of quasi-periodic
  motions under small perturbations of the Hamiltonian,
  \emph{Russ.\ Math.\ Surv.}\ \textbf{18}, 9--36 (1963).

\bibitem{Moser1962}
  J.~Moser,
  On invariant curves of area-preserving mappings of an annulus,
  \emph{Nachr.\ Akad.\ Wiss.\ Göttingen II}\ \textbf{1962}, 1--20 (1962).

\bibitem{Bardeen1980}
  J.~M. Bardeen,
  Gauge-invariant cosmological perturbations,
  \emph{Phys.\ Rev.\ D}\ \textbf{22}, 1882--1905 (1980).

\bibitem{KodamaSasaki1984}
  H.~Kodama and M.~Sasaki,
  Cosmological perturbation theory,
  \emph{Prog.\ Theor.\ Phys.\ Suppl.}\ \textbf{78}, 1--166 (1984).

\bibitem{Bardeen1986}
  J.~M. Bardeen, J.~R. Bond, N.~Kaiser, and A.~S. Szalay (BBKS),
  The statistics of peaks of Gaussian random fields,
  \emph{Astrophys.\ J.}\ \textbf{304}, 15--61 (1986).

\bibitem{Komatsu2001}
  E.~Komatsu and D.~N. Spergel,
  Acoustic signatures in the primary microwave background bispectrum,
  \emph{Phys.\ Rev.\ D}\ \textbf{63}, 063002 (2001).

\bibitem{Baumann2012}
  D.~Baumann, A.~Nicolis, L.~Senatore, and M.~Zaldarriaga,
  Cosmological non-linearities as an effective fluid,
  \emph{J.\ Cosmol.\ Astropart.\ Phys.}\ \textbf{1207}, 051 (2012).

\bibitem{Senatore2014}
  L.~Senatore and M.~Zaldarriaga,
  The effective field theory of large-scale structure,
  \emph{J.\ High Energy Phys.}\ \textbf{1402}, 049 (2014).

\bibitem{Ruelle1969}
  D.~Ruelle,
  \emph{Statistical Mechanics: Rigorous Results}
  (W.~A. Benjamin, New York, 1969).

\bibitem{Osano2020}
  B.~Osano,
  The thermodynamics for relativistic multi-fluid systems,
  \emph{Lett.\ High Energy Phys.}\ (2020).
  \doi{10.31526/LHEP.2020.155}

\bibitem{OsanoOreta2019}
  B.~Osano and T.~Oreta,
  A transient phase in cosmological evolution: a multi-fluid approximation
  for a quasi-thermodynamical equilibrium,
  \emph{Gen.\ Relativ.\ Gravit.}\ \textbf{52}, 42 (2020).
  \doi{10.1007/s10714-020-02696-y}



\end{thebibliography}

\end{document}